\newcommand{\ef}{envy\-/free}
\newcommand{\efness}{envy\-/freeness}
\newcommand{\aef}{averagely\-/\ef{}}
\newcommand{\uef}{unanimously\-/\ef{}}
\newcommand{\mef}{democratically\-/\ef{}}
\newcommand{\aefness}{average\-/\efness{}}
\newcommand{\uefness}{unanimous\-/\efness{}}
\newcommand{\mefness}{democratic\-/\efness{}}
\newcommand{\pr}{proportional}
\newcommand{\prness}{proportionality}
\newcommand{\apr}{averagely\-/\pr{}}
\newcommand{\upr}{unanimously\-/\pr{}}
\newcommand{\mpr}{democratically\-/\pr{}}
\newcommand{\aprness}{average\-/\prness{}}
\newcommand{\uprness}{unanimous\-/\prness{}}
\newcommand{\mprness}{democratic\-/\prness{}}
\newcommand{\unpr}{UnanimousPR}
\newcommand{\exact}{Exact}
\newcommand{\range}[2]{\in\{#1,\dots,#2\}}
\newcommand{\wavg}{W^{\text{avg}}}
\newcommand{\wmin}{W^{\text{min}}}
\newcommand{\wmed}{W^{\text{med}}}
\spnewtheorem*{lemma*}{Lemma}{\bf}{\it}
\def\logbound{\lceil\log_2{k}\rceil}
\def\comp{\textsc{Comp}}
\newenvironment{customthm}[1]
{\innercustomthm}
{\endinnercustomthm}
\begin{document}

\title{Fair Cake-Cutting among Families%
\thanks{
This research was funded in part by the following institutions: The Doctoral Fellowships of Excellence Program at Bar-Ilan University, the Mordechai and Monique Katz Graduate Fellowship Program, and the Israel Science Foundation grant 1083/13. 
\\ \\
We are grateful to Galya Segal-Halevi, Yonatan Aumann, Avinatan Hassidim, Noga Alon, Christian Klamler, Ulle Endriss, Neill Clift and Sophie Bade for helpful discussions.
We are grateful to the anonymous reviewers of Social Choice and Welfare for their helpful comments, which substantially improved both the contents and presentation of this paper.
\\ \\
This paper started with a discussion in the MathOverflow website at http://mathoverflow.net/questions/203060/fair-cake-cutting-between-groups . We are grateful to the members who participated in the discussion: Pietro Majer, Tony Huynh and Manfred Weis. Other members of the StackExchange network contributed useful answers and ideas: Alex Ravsky, Andrew D. Hwang, BKay, Christian Elsholtz, Daniel Fischer, David K, D.W., Hurkyl, Ittay Weiss, Kittsil, Michael Albanese, Raphael Reitzig, Real, Babou, 
Domotor Palvolgyi 	(domotorp), 
Ian Turton (iant) and ivancho.
}
}
\titlerunning{Fair Cake-Cutting among Families}  

\author{
Erel Segal-Halevi
\and 
Shmuel Nitzan
}

\institute{
Erel Segal Halevi: Department of Computer Science, Ariel University, Ariel 40700, Israel, erelsgl@gmail.com.
\\
Shmuel Nitzan: 
Department of Economics, Bar Ilan University, Ramat Gan 52900, Israel and Hitotsubashi Institute for Advanced Study, Hitotsubashi University. nitzans@biu.ac.il.
}

\date{Received: date / Accepted: date}

\maketitle              

\begin{abstract}
We study the fair division of a continuous resource, such as a land-estate or a time-interval, among pre-specified groups of agents, such as families. 
Each family is given a piece of the resource and this piece is used simultaneously by all family members, while different members may have different value functions.
Three ways to assess the fairness of such a division are examined. 
(a) Average Fairness means that each family's share is fair according to the "family value function", defined as the arithmetic mean of the value functions of the family members. 
(b) Unanimous Fairness means that all members in all families feel that their family's share is fair according to their personal value function.
(c) Democratic Fairness means that in each family, at least a fixed fraction (e.g. a half) of the members feel that their family's share is fair. 
We compare these criteria based on the number of connected components in the resulting division and on their compatibility with Pareto-efficiency.

\keywords{fair division, cake-cutting, public good, club good, no-envy}
\end{abstract}

\section{Introduction}
Fair division of heterogeneous resources among agents with different preferences has been an important issue since Biblical times. Today it is an active area of research in the interface of computer science \citep{Robertson1998CakeCutting,Procaccia2015Cake,Branzei2015Computational,Lindner2016} and economics \citep{Moulin2004Fair,Thomson2011Fair}. Its applications range from politics \citep{Brams1996Fair,Brams2007Mathematics} to multi-agent systems \citep{Chevaleyre2006Issues}.

In most fair division problems, the resource is divided among $n$ individual agents, and the fairness of a division is assessed based on their individual preferences.
A common fairness criterion is \textbf{\prness{}}. It requires that each agent receives a share that is at least as good as $1/n$ of the total endowment, according to the agent's individual preferences.%
\footnote{
The condition of receiving at least $1/n$ of the total endowment was introduced by \citet{Steinhaus1948Problem}.
Economists often call it \emph{fair-share guarantee} \citep{Bogomolnaia2017Competitive}.
Computer scientists often call it  \emph{proportionality}
\citep{Robertson1998CakeCutting}.
}

In practice, however, goods are often owned and used by groups. As an example, consider a land-estate inherited by $k$ families, a river that has to be divided among $k$ states, or the usage-time of a conference room that has to be divided among $k$ meeting groups. The resource (whether land or time) should be divided into $k$ pieces, one piece per group. Each group's share is then used by \emph{all} its members simultaneously. The land-plot allotted to a family is inhabited by the entire family. The share of the river allotted to a state becomes a national park open to all its citizens.
In the time-slot allotted to a group, the conference room is used by all group members.%
\footnote{
In economic terms, the allotted piece becomes a "club good" \citep{buchanan1965economic}. 
}

The happiness of each group member depends on his/her valuation of the entire share of the group. But, in each group there are different members with different valuations. The group's share can be valued by some of its members as at least $1/k$ of the total and by others as less than $1/k$ of the total. 
How, then, should the fairness of a division be assessed? 

The present paper studies this question in the classic setting of \emph{cake-cutting}, introduced by 
\citet{Steinhaus1948Problem}. 
In this setting, there is a measurable space (e.g. an interval or a polygon) called the \emph{cake}. The preferences of each agent are represented by a value-measure on the cake.%
\footnote{
The assumption that agents' preferences can be represented by measures is a strong one. It implies that the agents' value functions are linear --- the sum of values of two disjoint pieces equals the value of their union. 
This implies that the agents have constant marginal utilities \citep{Chambers2005Allocation} --- the marginal utility of a land-plot for an agent does not depend on the other land-plots owned by that agent.
Although this assumption is not always realistic, 
it is often made for practical reasons --- it is much easier to ask people to report linear preferences than to report general preferences. See 
\citet{Bogomolnaia2017Competitive} for a recent discussion of this issue.
Although most papers on the cake-cutting problem assume linearity, there are some notable exceptions; they are surveyed in subsection \ref{sub:nonadditive}.
}

We study three ways to assess the fairness of a division.

First, it is possible to aggregate the valuations in each family to a single \emph{family valuation}. Following the utilitarian tradition \citep{Bentham1789Introduction}, the family-valuation can be defined as the sum or (equivalently) the arithmetic average of the valuations of all family members.
We call a division 
\textbf{average-fair} if it is fair according to these family valuations. In particular, a division is \apr{} if every family receives a share with an average value (averaged over all family members) of at least $1/k$ of its average value of the entire endowment.

By this definition, the family-division problem is easy to solve. 
Since the average of measures is itself a measure, each family can be represented by a single agent, and the problem reduces to fair division among the $k$ representatives. Classic results imply that \apr{} 
allocations exist
\emph{(Section \ref{sec:avg-fairness})}.

Average fairness makes sense only when the numerical values of the agents' valuations are meaningful and they are all measured in the same units, e.g. in dollars (see chapter 3 of \cite{Moulin2004Fair} for some real-life examples of such situations). However, if the valuations represent individual happiness measures that cannot be put on a common scale, then their sum is meaningless, and other fairness criteria should be used.

A second option is to require that all members of every family agree that the division is fair. We call a division 
\textbf{unanimous fair}
if it is fair according to every individual valuation. In particular, a division is \upr{} if every agent values his/her family's share as at least $1/k$ of the total value%
. 
The advantage of this definition is that it does not need to assume that all valuations share a common scale. 
Even though it is a very strong requirement, we prove that \upr{}
allocations exist \emph{(Section \ref{sec:unan-fairness})}.

A disadvantage of unanimous fairness, compared to average fairness, is that unanimously-fair  divisions might be highly fractioned. When an interval is divided, there always exists an \apr{} 
division that is also \emph{connected} --- the share of each family is a single interval (Section \ref{sec:avg-fairness}).
However, 
there might not exist connected \upr{} 
divisions. Moreover, in some cases, the number of intervals in any \upr{} division might be at least $n$ --- the number of individual agents (Section \ref{sec:unan-fairness}). When the number of agents is large, as in the case of dividing land among states, such divisions might be impractical.

In democratic societies, decisions are almost never unanimous. In fact, when the number of citizens is large, it may be impossible to attain unanimity on even the most trivial issue, and decisions are often made by voting. Therefore we suggest a third fairness criterion. Given a fraction $h\in [0,1]$,
we call a division \textbf{$h$-democratic fair} if at least a fraction $h$ of the members in each family consider it fair. Unanimous\-/fairness is equivalent to $1$-democratic fairness.
The case $h={1\over 2}$ is particularly interesting.
${1\over 2}$-democratic fairness can be justified by the following process. After a division is proposed, each family conducts a referendum in which each member approves the division if he/she values the family's share as at least $1/k$ of the total. The division is implemented only if, in every family, a (weak) majority of the population approves it. 

Democratic\-/fairness combines some advantages of unanimous\-/fairness and
average\-/fairness.
It is similar to unanimous\-/fairness in that it does not need to assume that all valuations share a common scale.
It is similar to average\-/fairness in that, when $h\leq {1\over k}$, 
$h$-democratic fairness among $k$ families can be attained with connected pieces. In particular, with $k=2$ families, there always exists an allocation in which each family receives a single connected piece, and at least a weak majority in each family considers the allocation fair. An additional advantage of
democratic fairness in this case is that it can be computed efficiently \emph{(Section \ref{sec:dem-fairness})}.%
\footnote{In contrast, average-fair and unanimous-fair allocations cannot be computed by \emph{any} finite protocol. See Remark \ref{rem:computation} in page \pageref{rem:computation}.
}

Although democratic\-/fairness might leave some citizens unhappy, this may be unavoidable in real-life situations. This is understandable in light of Winston Churchil's dictum: ``democracy is the worst form of government, except all the others that have been tried''.%
\footnote{
A fourth fairness criterion that could be considered is \textbf{individual fairness}. In particular, an allocation is \emph{individually-\pr{}} if the allocation $X=(X_{1},\ldots,X_{k})$ admits a refinement $Y=(Y_{1},\ldots,Y_{n})$, where for each family $F_{j}$, $\cup_{i\in F_{j}}Y_{i}=X_{j}$, such that for each agent $i$, $V_{i}(Y_{i})\geq1/n$. Individually-fair
allocations always exist and can be found by using any classic fair division procedure on the individual agents, disregarding their families. Individual-fairness makes sense if, after the division of the land among the families, each family intends to further divide its share among its members. However, often this is not the case. When an inherited land-estate is divided between two families, the members of each family intend to live and use their entire share together, rather than dividing it among them. Therefore, the happiness of each family member depends on the entire value of his family's share, rather than on the value of a potential private share he would get in a hypothetic sub-division. 
}

While the geometric requirement of having a connected division is practically important, 
an even more important requirement from an economic perspective is Pareto-efficiency. All three variants of \prness{} are compatible with Pareto-efficiency. However, connectivity and Pareto-efficiency are incompatible even without fairness considerations \emph{(Section \ref{sec:efficiency})}.

The \prness{} criterion can be generalized to a situation in which each family has a different entitlement. Suppose that each family $j$ is entitled to a fraction $t_j$ of the resource (where $\sum_{j=1}^k t_j = 1$). An allocation is \upr{} if each member in family $j$ believes that the share of family $j$ is worth at least a fraction $t_j$ of the total. In a similar way it is possible to generalize \aprness{} and \mprness{}. In the simplest setting, the families have equal entitlements, i.e, for each $j\range{1}{k}$: $t_j = 1/k$. Equal entitlements make sense, for example, when $k$ siblings inherit their parents' estate. While an heir will probably like to take his family's preferences into account when selecting a share, each heir is entitled to $1/k$ of the estate regardless of the family size. 

In general, each family may have a different entitlement. The entitlement of a family may depend on its size but may also depend on other factors. For example, consider 
several families who jointly buy a vacation apartment. 
The apartment can host one family at a time, so the families have to divide the year (a time-interval) among them. The entitlement of each family naturally depends on the amount of money it contributed to the purchase, rather than on the family's size.%
\footnote{
See \citet{cseh2018complexity} 
for a recent account of fair division among individual agents with different entitlements.
}
The results presented in \emph{Sections \ref{sec:avg-fairness}--\ref{sec:efficiency}} consider \prness{} both with equal and with different entitlements.

An alternative fairness criterion that is very common in economics is \textbf{\efness{}}. In the context of individual agents, it means that each agent receives a share that is at least as good as the share of any other agent,  
according to the first agent's individual valuation.%
\footnote{
The condition of receiving at least as much as any other agent was introduced by
\citet{Gamow1958Puzzlemath} and 
\citet{Foley1967Resource}.
Economists often call it \emph{no envy} \citep{Bogomolnaia2017Competitive}.
Computer scientists often call it  \emph{envy-freeness} \citep{Robertson1998CakeCutting}.
}
In the context of families, three variants of \efness{} can be defined analogously to the three variants of \prness{} (for families with equal entitlements): \aefness{}, \uefness{} and \mefness{} \emph{(Section \ref{sec:no-envy})}.

From a geometric perspective, these three variants behave similarly to their \prness{} counterparts, that is: 
\begin{itemize}
\item Connected \aef{} allocations always exist;
\item Connected \uef{} allocations are not guaranteed to exist even for two families;
\item Connected ${1\over k}$-\mef{} allocations are guaranteed to exist for $k$ families. In particular, 
connected ${1\over 2}$-\mef{} allocations are guaranteed to exist for two families (but not for three or more families).
\end{itemize}
However, from an economic perspective, \efness{} behaves differently:
\begin{itemize}
\item Pareto-efficient \aef{} allocations always exist;
\item Pareto-efficient \uef{} allocations are guaranteed to exist for two but not for three or more families;
\item Pareto-efficient ${1\over 2}$-\mef{} allocations are guaranteed to exist for two but not for five or more families (we do not know whether they always exist for three or four families).
\end{itemize}

The paper is organized as follows.
Most of the paper focuses on the \emph{\prness{}} criterion. Section \ref{sec:Model-and-Notation} formally presents the model. Sections \ref{sec:avg-fairness}, \ref{sec:unan-fairness} and \ref{sec:dem-fairness} study average, unanimous and democratic \prness{} respectively. We study this criterion both for families with equal entitlements and for families with different entitlements.

Section \ref{sec:efficiency} studies the three variants of \prness{} in combination with Pareto-efficiency. 
Section \ref{sec:no-envy} studies family fairness based on the \efness{} criterion, explaining the differences between the results for \prness{} and for \efness{}. 
Finally, Section \ref{sec:related} compares our work to previous and ongoing related work.

\section{Model and Notation}
\label{sec:Model-and-Notation}
\subsection{Resource and agents}
In the usual cake-cutting setting, there is a resource $C$ (``cake'') that has to be divided. For simplicity it is assumed that $C$ is an interval in $\mathbb{R}$.
A realistic example of such a resource is time:
consider a conference room that can host a single meeting at a time. It is available between 8:00 and 20:00, and this time-interval must be divided among all those who want to use the room.
Another realistic example is the shoreline of a sea or a river: while usually not a straight line, it can be easily mapped to an interval.

There is a set of agents $N = \{1,\ldots,n\}$. Each agent $i\in N$ has a value measure $V_i$, defined on the Borel subsets of $C$. The $V_i$ are assumed to be nonatomic, so that all singular points have a value of 0 to all agents. 
As the term measure implies, the $V_i$ are \emph{additive} --- the value of a union of two disjoint pieces is the sum of the values of the pieces. 
The value measures are normalized such that $\forall i: \, V_i(\emptyset)=0, \, V_i(C)=1$.

\subsection{Families and entitlements}
In our setting, 
there is a set of families $F = \{F_1,\ldots,F_k\}$. We use the term ``family'' to emphasize that the partition of agents to groups is fixed in advance and cannot be modified during the division process.

The number of agents in $F_j$ is denoted $n_j$. Each agent $i\in N$ is a member of exactly one family $F_j\in F$, so $n=\sum_{j=1}^{k}n_j$. 

For each family $F_j$, there is a positive number $t_j$ representing the entitlement of this family. The sum of all entitlements is one: $1 = \sum_{j=1}^k t_j$. 
In the special case in which all families have equal  entitlements, we have for each $j\range{1}{k}$: $t_j = 1/k$. 

\subsection{Allocations and components}
An \emph{allocation} is a vector of $k$ pieces, $X=(X_1,\dots,X_k)$, one piece per family, such that the $X_j$ 
are pairwise-disjoint and $\cup_{j}{X_j} = C$. 

Each piece is a finite union of intervals. We denote by $\comp(X_j)$ the number of connected components (intervals) in the piece $X_j$, and by $\comp(X)$ the total number of components in the allocation X, i.e:
\begin{align*}
\comp(X) = \sum_{j=1}^k \comp(X_j)
\end{align*}
Ideally, we would like that each piece be connected, i.e, $\forall i: \comp(X_i)=1$ and $\comp(X)=k$. This requirement is especially meaningful when the divided resource is a time-interval or a land-resource (e.g. a river-bank), since a contiguous piece of time or land is much easier to use than a collection of disconnected patches. 

However, we will show that a fair division with connected pieces is not always possible.%
\footnote{
This impossibility
appears not only in our one-dimensional theoretic model but also in practical, two-dimensional land division situations. A striking example was the India-Bangladesh border. According to Wikipedia page \emph{India\textendash Bangladesh enclaves}, up to 2015, ``Within the main body of Bangladesh were 102 enclaves of Indian territory, which in turn contained 21 Bangladeshi counter-enclaves, one of which contained an Indian counter-counter-enclave... within the Indian mainland were 71 Bangladeshi enclaves, containing 3 Indian counter-enclaves''. Another example is \emph{Baarle-Hertog} --- a Belgian municipality made of 24 separate parcels of land, most of which are exclaves in the Netherlands.
For more details and examples see the Wikipedia page \emph{List of enclaves and exclaves}. We are grateful to Ian Turton for the references.
}
In case a division with connected pieces is not possible, it is still desirable that the number of connectivity components --- $\comp(X)$ --- be as small as possible. 
When dividing an interval, the components are sub-intervals and their number is one plus the number of \emph{cuts}. Hence, the number of components is minimized by minimizing the number of cuts \citep{Robertson1995Approximating,Webb1997How,Shishido1999MarkChooseCut,Barbanel2004Cake,Barbanel2014TwoPerson}.
In a realistic, 3-dimensional world, the additional dimensions can be used to connect the components, e.g, by bridges or tunnels. Still, it is desirable to minimize the number of components in the original division in order to reduce the number of required bridges/tunnels.%
\footnote{
The goal of minimizing the number of components is pursued not only in cake-cutting papers but also in real-life politics. Going back to India and Bangladesh, after many years of negotiations they finally started to exchange most of their enclaves during the years 2015-2016. This reduced the number of components from 200 to a more reasonable number.
}

\subsection{Fairness criteria}
\label{sub:fairness}
We first define the \emph{family-valuation} functions:
\begin{align*}
\wavg_j(X_j)=\frac{\sum_{i\in F_j}V_i(X_j)}{n_j}\,\,\,\,\,\,\,\,\,\,\,\,\,\textrm{{for}\,}\,j\in\{1,...,k\}.
\end{align*}
Now, an allocation $X$ is called:
\begin{align*}
\text{\apr{}}
&&
\text{if~}
\forall j\range{1}{k}: 
&&
\wavg_j(X_j)\geq t_j;
\\
\text{\upr{}}
&&
\text{if~}
\forall j\range{1}{k}: 
&&
\forall i\in F_j: V_i(X_j)\geq t_j;
\\
\text{$h$-\mpr{}}
&&
\text{if~}
\forall j\range{1}{k},
\end{align*}
\vskip -8mm
\begin{align*}
\text{for at least a fraction $h$ of the members }i\in F_j: 
V_i(X_j)\geq t_j.
\end{align*}
A property of an allocation is called \emph{feasible} if for every $k$ families and $n$ agents there exists an allocation satisfying this property. Otherwise, the property is called \emph{infeasible}. In the following sections we will study the feasibility of the above fairness criteria.

Note that \uprness{} obviously implies both \aprness{} and $h$-\mprness{} for any $h\in[0,1]$. The other two do not imply each other, as shown in the following example.
\begin{example}
Consider an interval consisting of four sub-intervals. It has to be divided between two families: (1) \{Alice,Bob,Chana\} and (2) \{David,Esther,Frank\}. The families have equal entitlements, i.e, $t_1=t_2=1/2$.  Each member's valuation of each sub-interval is shown in the table below:

\begin{center}
\begin{tabular}{|c|c|c|c|c|c|c|}
\hline Alice   & 60 & 30 & 3 & 3 \\
\hline Bob     & 50 & 40 & 3 & 3 \\
\hline Chana & 10 & 80 & 3 & 3 \\
\hline
\hline David   & 3 & 3 & 60 & 30 \\
\hline Esther     & 3 & 3 & 60 & 30 \\
\hline Frank & 3 & 3 & 0  & 90 \\
\hline
\end{tabular}
\end{center}
Note that the value of the entire interval is 96 for all agents. Therefore, \prness{} implies that each family should get a value of at least $48$.

If the two leftmost subintervals are given to family 1 and the two rightmost subintervals are given to family 2, then the division is \emph{\upr{}}, since each member of each family feels that his family's share is worth 90. This division is also, of course, \apr{} and \mpr{}.

If only the single leftmost subinterval is given to family 1 and the other three are given to family 2, then the division is still \emph{${1\over 2}$-\mpr{}}, since Alice and Bob feel that their family received more than 48. However, Chana feels that her family received only 10, so the division is not \upr{}. Moreover, the division is not \apr{} since the average valuation of family 1 is only (60+50+10)/3=40.

If the three leftmost subintervals are given to family 1 and only the rightmost one is given to family 2, then the division is \emph{\apr{}}, since family 2's average valuation of its share is (30+30+90)/3=50. However, it is not \upr{} nor even ${1\over 2}$-\mpr{}, since David and Esther feel that their share is worth only 30. \qed
\end{example}

\section{Average fairness}
\label{sec:avg-fairness}
With average fairness, the family cake-cutting problem can be reduced to the classic problem of cake-cutting among individuals. This gives the following results.
\begin{theorem}
\label{pos-apr}
(a) When families have equal entitlements, \aprness{} with connected pieces (and $k$ components) is feasible.

(b) When families have different entitlements, \aprness{} with connected pieces is infeasible. Moreover, at least $2 k -1$ components may be required for an \apr{} allocation.

(c) When families have different entitlements, \aprness{} with at most $O(k \log{k})$ components
is feasible.
\end{theorem}
\begin{proof}
The positive results --- parts (a) and (c) --- are based on the following reduction.
For each family $F_j$, define a \emph{representative agent} $A_j$ whose valuation is the function $\wavg_j$ defined in Subsection \ref{sub:fairness} above. 
Note that, since the $V_i$ are all nonatomic measures, the $k$ family-valuations $\wavg_j$ are nonatomic measures too. 
By classic results \citep{Steinhaus1948Problem,Even1984Note},
when there are $k$ agents with equal entitlements, there always exists a connected \pr{} division.
As shown in a recent technical report
\citep{SegalHalevi2018Entitlements}, when there are $k$ agents with different entitlements, there always exists a \pr{} allocation with at most 
$2 k \log_2{\widehat{k}} - 2 \widehat{k} + 2$ cuts, where $\widehat{k} := 2^{\lceil\log_2{k}\rceil}=k$ rounded up to the nearest power of two. 
These cuts 
create $2 k \log_2{\widehat{k}} - 2 \widehat{k} + 3$ components.
By definition, such a division is an \apr{} division among the families.

The negative result (b) follows immediately from an identical negative result for individual agents \citep{SegalHalevi2018Entitlements}, by considering $k$ one-member families.
\qed
\end{proof}

\begin{remark}
\label{rem:computation}
Fairness for individuals and average-fairness for families are equivalent only from an existential perspective; from a computational perspective they are quite different. 
A connected \pr{} division among $k$ individual agents with equal entitlements can be found by asking the agents $O(k\log k)$ queries \citep{Even1984Note}.%
\footnote{
In the standard computational model of fair cake-cutting, the 
central planner is allowed to ask each individual agent one of two types of queries: an \emph{eval query} asks about the agent's value for a given piece of cake, and a \emph{mark query} asks the agent to denote a piece with a given value. See \citet{Robertson1998CakeCutting} for a formal definition.
}
However, an \apr{} division cannot be found using a finite number of such queries, even when there are $k=2$ families with two agents in each family, and even without any restrictions on the number of components.
\begin{proof}
The proof is by a reduction from the problem known as \emph{equitable cake-cutting}. This is a variant of fair division among individual agents, in which the fairness criterion for an allocation $X$ is that every two agents have the same subjective utility, i.e. $\forall i,j: V_i(X_i) = V_j(X_j)$.
\citet{Procaccia2017Lower}, extending a previous result by \citet{Cechlarova2012Computability}, showed that an equitable allocation cannot be computed by a finite number of  queries, even when there are only two individual agents and no connectivity constraints.
We show that the same is true for \apr{} allocation.

Given an instance of equitable cake-cutting with two agents with value-measures $V_1$ and $V_2$, construct an instance of \apr{} division with two families, where in each family there are two members with value-measures $V_1$ and $V_2$. We claim that an allocation is equitable in the original problem if-and-only-if it is \apr{} in the new problem:
\begin{align*}
&
\text{The allocation $(X_1,X_2)$ is equitable among the individuals}
\\
\iff &
V_1(X_1) = V_2(X_2)
\\
\iff &
V_1(X_1) = 1 - V_2(X_1) ~~~~ \text{(since $V_2(X_1)+V_2(X_2)=V_2(C)=1$)}
\\
\iff &
[V_1(X_1) + V_2(X_1)]/2 = 1/2 
\\
\iff &
[V_1(X_1) + V_2(X_1)]/2 \geq 1/2 
\text{~~and~~}
[V_1(X_1) + V_2(X_1)]/2 \leq 1/2 
\\
\iff &
[V_1(X_1) + V_2(X_1)]/2 \geq 1/2 
\text{~~and~~}
[V_1(X_2) + V_2(X_2)]/2 \geq 1/2 
\\
&
~~~~~~~~\text{(since $V_i(X_1)+V_i(X_2)=V_i(C)=1$)}
\\
\iff &
\text{The allocation $(X_1,X_2)$ is \apr{} among the families.}
\end{align*}
Hence, if we could find an \apr{} division by a finite number of queries, we could also find an equitable division by a finite number of queries --- a contradiction to \citet{Procaccia2017Lower}.
\qed
\end{proof} 
\end{remark}

\section{Unanimous fairness}
\label{sec:unan-fairness}
Before presenting our results, we note that \uprness{}, like \aprness{}, can also be defined using family-valuation functions. Define:
\begin{align*}
\wmin_j(X_j):=\min_{i\in F_j}V_i(X_j) \,\,\,\,\,\,\,\,\,\,\,\,\,\textrm{{for}\,}\,j\in\{1,...,k\}.
\end{align*}
Then, a division is \upr{} if and only if:
\begin{align*}
\forall j: \wmin_j(X_j)\geq t_j
\end{align*}
However, in contrast to the functions $\wavg$ defined in Section \ref{sec:avg-fairness}, the functions $\wmin$ are in general not additive. For example, suppose $C$ is an interval with three subintervals and a family has the following valuations:
\begin{center}
\begin{tabular}{|c|c|c|c|c|c|}
\hline           & $C_1$ & $C_2$ & $C_3$ & $C_1\cup  C_2\cup C_3$ \\
\hline
\hline Alice      & 1 & 1 & 1 & $3 = 1+1+1$ \\
\hline Bob        & 0 & 2 & 1 & $3 = 0+2+1$ \\
\hline Chana    & 0 & 1 & 2 & $3 = 0+1+2$ \\
\hline
\hline $\wmin$ & 0 & 1 & 1 & $3 > 0+1+1$ \\
\hline
\end{tabular}
\end{center}
While the individual valuations are additive, $\wmin$ is not additive (it is not even subadditive). Therefore, the classic results we used in Theorem \ref{pos-apr} are inapplicable here, and different techniques are needed.

\subsection{Exact division}
Initially, we assume that the entitlements are equal, i.e: $t_j=1/k$ for all $j$. We relate \uprness{} to the problem of finding an \emph{exact division}:%
\footnote{The definition uses capital $N$ and $K$ to distinguish the parameters of exact division from the parameters of unanimous-fair division.}.
\begin{definition}
\exact{}$(N,K)$ is the following problem. Given $N$ agents and an integer $K$, divide $C$ into $K$ pieces, such that each of the $N$ agents assigns exactly the same value to all pieces: 
\begin{align*}
\forall j=1,...,K:\,\,\forall i=1,...,N:\,\,V_i(X_j)=1/K.
\end{align*}
\end{definition}
From an economic perspective, there is little intrinsic value in the concept of exact division.
However, in this section we will prove that it is closely linked to the concept of unanimously-fair division. In fact, we will prove that the existence of a solution to each of these problems implies a solution to the other problem.

Below, we denote by \unpr{}$(n,k)$ the problem of finding a \upr{} division when there are $n$ agents grouped in $k$ families with equal entitlements. 

\subsection{\unpr$\implies$ Exact}
\begin{lemma}
\label{lemma:unprop-implies-exact}
For every pair of integers $N\geq 1,K\geq 1$, a solution to \unpr{} $(N(K-1)+1,\,K)$ implies a solution to \exact{} $(N,K)$
with the same number of components.
\end{lemma}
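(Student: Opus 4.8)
The plan is to prove the lemma as a reduction: starting from an arbitrary instance of \exact{}$(N,K)$, given by $N$ value measures $U_1,\dots,U_N$, I will build an instance of \unpr{}$(N(K-1)+1,K)$ whose \upr{} allocations are forced to be exact divisions for $U_1,\dots,U_N$. The guiding observation is a dimension count: within the affine space cut out by the $N$ normalization identities $\sum_{j=1}^K U_t(X_j)=1$, pinning down the single point where every $U_t(X_j)=1/K$ requires trapping it by $N(K-1)+1$ inequalities whose bounding hyperplanes all pass through it — which is exactly the number of agents available.

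Concretely, I would place $N$ agents in each of the first $K-1$ families, giving the agents in every such family the measures $U_1,\dots,U_N$; and I would put a single agent in family $K$, whose measure is the average $\bar V:=\frac{1}{N}\sum_{t=1}^N U_t$. This uses $N(K-1)+1$ agents and $K$ families with equal entitlements $w_j=1/K$, as required. A preliminary step is to check that $\bar V$ is a legitimate value measure for the model: being a convex combination of the $U_t$ it is additive, non-atomic and normalized with $\bar V(C)=1$.

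Now suppose $X=(X_1,\dots,X_K)$ is a \upr{} allocation of this instance. Since every family $j\le K-1$ contains a member with measure $U_t$, unanimous-proportionality gives $U_t(X_j)\ge 1/K$ for all $t$ and all $j\le K-1$. Summing over $j\le K-1$ and using $\sum_{j=1}^K U_t(X_j)=U_t(C)=1$ yields $U_t(X_K)\le 1/K$ for every $t$. The lone agent of family $K$ contributes $\bar V(X_K)\ge 1/K$, i.e.\ $\sum_{t=1}^N U_t(X_K)\ge N/K$; combined with the $N$ upper bounds $U_t(X_K)\le 1/K$, this forces $U_t(X_K)=1/K$ for every $t$. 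Back-substituting, $\sum_{j=1}^{K-1}U_t(X_j)=(K-1)/K$ with each summand at least $1/K$ forces $U_t(X_j)=1/K$ for all $j\le K-1$ as well, so $X$ is exact.

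The crux — and the step I expect to be the main obstacle to discover rather than to verify — is the role of the single averaging agent in family $K$. The $N(K-1)$ ordinary agents only produce lower bounds, and on their own they merely confine each $U_t(X_K)$ to be at most $1/K$ without pinning it; a naive attempt to force equality by repeating every measure in every family would cost $NK$ agents and overshoot the budget. The one extra agent, carrying the averaged measure, supplies the single aggregate lower bound $\sum_t U_t(X_K)\ge N/K$ that is exactly complementary to the $N$ individual upper bounds, collapsing the whole chain of inequalities into equalities. Verifying the edge cases $K=1$ and $N=1$ (where the families $1,\dots,K-1$ are absent, respectively all measures coincide) is then routine.
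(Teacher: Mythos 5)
Your proposal is correct and is essentially identical to the paper's own proof: the same construction ($N$ copies of the given measures in each of the first $K-1$ families plus one averaging agent in family $K$) and the same chain of inequalities forcing all values to equal $1/K$. The extra remarks on the dimension count and the legitimacy of the averaged measure are sound but not needed beyond what the paper already argues.
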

\begin{proof}
Given an instance of \exact{}$(N,K)$ ($N$ agents and a number $K$ of required pieces), create $K$ families. Each of the first $K-1$ families contains $N$ agents with the same valuations as the given agents. The $K$-th family contains a single agent whose valuation is $V^*$, defined as the average of $V_1,\ldots,V_N$:

\begin{align*}
V^{*}=\frac{1}{N}\sum_{i=1}^{N}V_{i}.
\end{align*}
The total number of agents in all $K$ families is $N(K-1)+1$. Use
\unpr{} $(N(K-1)+1,\,K)$ to find a \upr{} division, $X$. By definition of unanimous fairness, for each agent $i$ in family $j$: $V_i(X_j) \geq 1/K$. 

By construction, each of the first $K-1$ families has an agent with valuation $V_i$. Hence, all $N$ agents value each of the first $K-1$ pieces as at least $1/K$ and:

\[
\forall i=1,...,N:\,\,\,\,\,\,\sum_{j=1}^{K-1}V_{i}(X_{j})\geq\frac{K-1}{K}.
\]
Hence, by additivity, every agent values the $K$-th piece as at most $1/K$:

\[
\forall i=1,...,N:\,\,\,\,\,\,V_{i}(X_{K})\leq1/K.
\]
The piece $X_{K}$ is given to the agent with value measure $V^{*}$, so by \prness{}: $V^{*}(X_{K})\geq1/K$. By construction, $V^{*}(X_{K})$ is the average of the $V_{i}(X_{K})$. Hence:

\[
\forall i=1,...,N:\,\,\,\,\,\,V_{i}(X_{K})=1/K.
\]
Again by additivity:

\[
\forall i=1,...,N:\,\,\,\,\,\,\sum_{j=1}^{K-1}V_{i}(X_{j})=\frac{K-1}{K}.
\]
Hence, necessarily: 

\[
\forall i=1,...,N,\,\,\,\,\,\forall j=1,...,K-1:\,\,\,\,\,\,V_{i}(X_{j})=1/K.
\]
So we have found an exact division and solved \exact$(N,K)$ as required.
\qed
\end{proof}
\citet{Alon1987Splitting} proved that for every $N$ and $K$, an \exact$(N,K)$ division might require at least  $N(K-1)+1$ components. Combining this result with the above lemma implies the following negative result:
\begin{theorem}
\label{neg-upr}
For every $N,K$, let $n=N(K-1)+1$. A \upr{} division for $n$ agents grouped into $K$ families might require at least $n$ components.
\end{theorem}
In particular, \uprness{} with connected pieces is infeasible.

\subsection{Exact $\implies$ \unpr}
\begin{lemma} \label{lemma:exact-implies-unprop}
For every pair of integers $n\geq 2, k\geq 1$, a solution to \exact{} $(n-1,k)$ implies a solution to \unpr{} $(n,k)$ for any grouping of the $n$ agents to $k$ families.
\end{lemma}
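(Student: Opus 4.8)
The plan is to reduce \unpr{}$(n,k)$ to \exact{}$(n-1,k)$ by singling out one agent and exploiting the fact that, in an exact division, every \emph{participating} agent is completely indifferent among the pieces. First I would pick an arbitrary agent $a$, belonging to some family $F_{j_0}$, and set it aside. Applying the assumed solution to \exact{}$(n-1,k)$ to the remaining $n-1$ agents produces a partition of the cake into $k$ pieces $X_1,\dots,X_k$ such that every one of these $n-1$ agents values each piece at exactly $1/k$.

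The next step handles the set-aside agent $a$. Since the value measures are normalized and additive, the values $V_a(X_1),\dots,V_a(X_k)$ sum to $V_a(C)=1$, so by an averaging (pigeonhole) argument at least one piece, say $X_{j^*}$, satisfies $V_a(X_{j^*})\geq 1/k$. I would then assign this piece $X_{j^*}$ to the family $F_{j_0}$ containing $a$, and distribute the remaining $k-1$ pieces among the other $k-1$ families in an arbitrary one\-/to\-/one fashion.

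It remains to verify that the resulting allocation is \upr{} for the equal entitlements $w_j=1/k$. Every member of a family other than $F_{j_0}$ is one of the $n-1$ agents of the exact division, and hence values its family's assigned piece at exactly $1/k=w_j$. Within $F_{j_0}$, the agent $a$ values its piece $X_{j^*}$ at least $1/k$ by the choice of $j^*$, while every other member of $F_{j_0}$ is again an exact\-/division agent and values $X_{j^*}$ at exactly $1/k$. Thus every agent values its family's share at least $w_j=1/k$, which is precisely \uprness{}.

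The construction works for an arbitrary grouping because the only agent who is not automatically forced to value every piece at exactly $1/k$ is the single set\-/aside agent $a$, and the freedom to route pieces to families lets us send $a$'s favourite piece to $a$'s own family. The one point that requires care---and the crux of the argument---is recognizing that leaving out exactly \emph{one} agent suffices: the $n-1$ exact\-/division agents impose no constraint whatsoever on the assignment, so all of the available slack can be concentrated on satisfying the lone remaining agent via the pigeonhole step. I do not expect any genuine obstacle here; the subtlety is purely in identifying that a single discarded agent is both necessary (to match the $n-1$ in \exact{}) and sufficient.
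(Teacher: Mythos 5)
Your proposal is correct and follows essentially the same route as the paper's proof: set aside one agent, apply \exact{}$(n-1,k)$ to the rest, let the set-aside agent's family receive his favourite piece (worth at least $1/k$ by pigeonhole), and assign the remaining pieces arbitrarily. The verification you give is exactly the one the paper relies on implicitly, so there is nothing to add.
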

\begin{proof}
Suppose we are given an instance of \unpr$(n,k)$, i.e, we are given some $n$ agents grouped into $k$ families. Select $n-1$ agents arbitrarily. Use \exact$(n-1,k)$ to find a partition of $C$ into $k$ pieces, such that each of the $n-1$ agents values each of these pieces at exactly $1/k$. Ask the $n$-th agent to choose a piece with a maximal value for him/her. The average value of a piece is $1/k$, so the piece of maximal value is worth for the $n$-th agent at least $1/k$. Give that piece to the family of the $n$-th agent. Give the other $k-1$ pieces arbitrarily to the remaining $k-1$ families. The resulting division is \upr{}.
\qed
\end{proof}
\citet{Alon1987Splitting} proved that for every $N$ and $K$, \exact$(N,K)$ has a solution with at most $N(K-1)+1$ components (at most $N(K-1)$ cuts). Combining this result with the above lemma implies the following positive result:
\begin{theorem}
\label{pos-upr}
Given $n$ agents in $k$ families with equal entitlements, a \upr{} division with $(n-1)\cdot(k-1)+1$ components is feasible.
\end{theorem}
For $k=2$ families, the number of components in Theorem \ref{pos-upr} is $n$, which matches the lower bound of Theorem \ref{neg-upr}.
For $k>2$ families, the number of components can be made smaller, as explained below.

\subsection{Less components: equal entitlements}
\label{sub:unprop-equal}
The purpose of this subsection is to find a \upr{} allocation with fewer components than the guarantee of Theorem \ref{pos-upr}, when all families have equal entitlements.

We start with an example. Assume there are $k=4$ families. As in Theorem \ref{pos-upr}, using $3(n-1)$ cuts, $C$ can be divided into 4 subsets which are considered equal by $n-1$ members. But for a \upr{} division, it is not required that all members think that all pieces are equal, it is only required that all members believe that their family's share is worth at least $1/4$. This can be achieved as follows:
\begin{itemize}
\item Divide $C$ into two subsets which all $n$ agents value at exactly $1/2$. This is equivalent to solving \exact$(n,2)$, which by \citet{Alon1987Splitting}, can be done with at most $n$ cuts. Call the two resulting subsets West and East.
\item Assign arbitrary two families to West and the other two families to East. Mark by $n_W$ the total number of members in the families assigned to West and by $n_E$ the total number of members assigned to East.
\item Divide the West into two pieces which all $n_W$ agents value at exactly $1/4$; this can be done with $n_W$ cuts. Give a piece to each family. Divide the East similarly using $n_E$ cuts.
\end{itemize}
The first step requires $n$ cuts and the second step requires $n_{W}+n_{E}=n$ cuts too. Hence the total number of cuts required is only $2n$, rather than $3n-1$. 

In fact, two cuts can be saved in each step by excluding two members (from two different families) from the exact division. These members will not think that the division is equal, but they will be allowed to choose their favorite piece for their family. Thus only $2(n-2)$ cuts are required. A simple inductive argument shows that whenever $k$ is a power of 2, $(\log_{2}k)\cdot(n-k/2)$ cuts are required.

When $k$ is not a power of 2, a result by \cite{Stromquist1985Sets} can be used. They prove that, for every fraction $r\in[0,1]$, it is possible to cut a piece of $C$ such that all $n$ agents agree that its value is exactly $r$ using at most $2n-2$ cuts.\footnote{They prove that, if $C$ is a circle, the number of connected components is $n-1$. Hence, the number of cuts is $2n-2$. This is also true when $C$ is an interval, although the number of connected components in this case is $n$.} This can be used as follows:
\begin{itemize}
\item Select integers $l_1,\,l_2\in\{1,...,k-1\}$ such that $l_1+l_2=k$.
\item Apply \cite{Stromquist1985Sets} with $r=l_1/k$: using $2n-4$ cuts, cut a piece $X_1$ that $n-1$ agents value at exactly $l_1/k$. This means that these $n-1$ agents value the other piece, $X_2$, at exactly $l_2/k$.
\item Let the $n$-th agent choose a piece for his family; assign the other families arbitrarily such that $l_1$ families are assigned to piece $X_1$ and the other $l_2$ families to piece $X_2$.
\item Recursively divide piece $X_1$ to its $l_1$ families and piece $X_2$ to its $l_2$ families. 
\end{itemize}
After a finite number of recursion steps, the number of families assigned to each piece becomes 1 and the procedure ends. The number of cuts in each level of the recursion is at most $2n-4$. The depth of recursion can be bounded by $\lceil\log_{2}k\rceil$ by dividing $k$ to halves (if it is even) or to almost-halves (if it is odd; i.e. take $l_1 = (k-1)/2$ and $l_2=(k+1)/2$). Hence:
\begin{theorem}
\label{pos-upr-log}
Given $n$ agents in $k$ families with equal entitlements, a \upr{} division with $\logbound\cdot(2n-4)+1$ components is feasible.
\end{theorem}
Note that Theorem \ref{pos-upr} and Theorem \ref{pos-upr-log} both give upper bounds on the number of components required for \uprness{}. The bound of Theorem \ref{pos-upr} is stronger when $k$ is small and the bound of Theorem \ref{pos-upr-log} is stronger when $k$ is large.

\subsection{Less components: different entitlements}
\label{sub:unprop-diff}
The purpose of this subsection is to find a \upr{} allocation with fewer components than the guarantee of Theorem \ref{pos-upr}, when families may have different entitlements.

When families have different entitlements, the procedure of the previous subsection cannot be used. We cannot let the $n$-th agent select a piece for his family, since the pieces are different. For example, suppose there are two families with entitlements $t_1=1/3,t_2=2/3$. We can divide $C$ into two pieces $X_1,X_2$ such that $n-1$ agents value $X_1$ as 1/3 and $X_2$ as 2/3. So all of them agree that $X_1$ should be given to family 1 and $X_2$ should be given to family 2. But, the $n$-th agent might select the wrong piece for his family. Therefore, the procedure should be modified as follows.

\begin{itemize}
\item Select an integer $l\in\{1,...,k-1\}$.
\item Divide the families into two subsets: $F_1,\dots,F_l$ and $F_{l+1},\dots,F_k$.
\item Apply \cite{Stromquist1985Sets} with $r=\sum_{j=1}^l t_j$: using $2n-2$ cuts, cut a piece $X_1$ which all $n$ agents value at exactly $\sum_{j=1}^l t_j$. This means that all $n$ agents value the other piece, $X_2$, at exactly $\sum_{j=l+1}^k t_j$.
\item Recursively divide piece $X_1$ to $F_1,\dots,F_l$ and piece $X_2$ to $F_{l+1},\dots,F_k$.
\end{itemize}
Here, the number of cuts in each level of the recursion is at most $(2n-2)$. The depth of recursion can be bounded by $\lceil\log_{2}k\rceil$ by choosing $l=k/2$ (if $k$ is even) or $l=(k-1)/2$ (if $k$ is odd). Hence:
\begin{theorem}
\label{pos-upr-different}
Given $n$ agents in $k$ families with different entitlements, a \upr{} division with $\logbound\cdot(2n-2)+1$ components is feasible.
\end{theorem}

To conclude the analysis of \uprness{}, recall that, even for $k=2$ families, \uprness{} is as difficult as exact division and might require the same number of components --- $n$. In the worst case, we might need to give a disjoint component to each member, which negates the concept of division to families. Therefore we now turn to the analysis of an alternative fairness criterion that yields more useful results.

\section{Democratic fairness}
\label{sec:dem-fairness}
Like \uprness{} (Section \ref{sec:unan-fairness}), $h$-\mprness{} too can be defined using family-valuation functions. For example, for $h={1\over 2}$:
\begin{align*}
\wmed_j(X_j):=\frac{\text{median}_{i\in F_j}V_i(X_j)}{n_j}\,\,\,\,\,\,\,\,\,\,\,\,\,\textrm{{for}\,}\,j\in\{1,...,k\}.
\end{align*}
A division is ${1\over 2}$-\mpr{} if and only if:
\begin{align*}
\forall j: \wmed_j(X_j)\geq t_j
\end{align*}
However, the $\wmed$ functions are not additive,\footnote{See the example in the beginning of Section \ref{sec:unan-fairness}. In that example $\wmed$ is identical to $\wmin$.} so again the classic results referred to in Theorem \ref{pos-apr} are inapplicable.

\subsection{A division procedure}
We start with a positive result for families with equal entitlements, which shows that \mprness{} is substantially easier than \uprness{}.

\begin{theorem}
\label{pos-mpr-connected}
For every integer $k\geq 2$, when there are $k$ families with equal entitlements, 
${1\over k}$-\mprness{} with connected pieces is feasible and can be found by an efficient protocol.
\end{theorem}
\begin{proof}
The Dubins-Spanier moving-knife protocol \citep{Dubins1961How} can be adapted to families as follows.
A knife moves continuously over the cake from left to right. Whenever in a certain family at least $1/k$ of its members believe that the cake to the left of the knife is worth at least $1/k$, they shout ``stop'', the cake is cut at the knife location, and the shouting family receives the cake to its left. The division so far is \pr{} for $1/k$ of the members in this family.

In the remaining $k-1$ families, at least $(k-1)/k$ of the members believe the remaining cake is worth at least $(k-1)/k$ of its original value. Dividing the remaining cake recursively using the same procedure yields a division that $1/(k-1)$ of $(k-1)/k$ of the members in each remaining family value as at least $1/(k-1)$ of $(k-1)/k$ of the original value; in other words, the division is \pr{} for at least $1/k$ of the members in each family.
\qed
\end{proof}
Theorem \ref{pos-mpr-connected} is particularly useful for $k=2$ families. It implies the existence of a connected division that is considered fair by at least a weak majority in each family.

Unfortunately, this positive result cannot be improved --- it is impossible to guarantee the support of a weak majority when there are three or more families, and it is impossible to guarantee the support of larger majority when there are two families. This is proved in the following subsection.

\subsection{Three or more families: an impossibility result}
\label{sub:majprop-equal}
This subsection presents a lower bound on the number of components required for a democratic-fair division. 
The lower bound holds not only for \prness{} but even for a much weaker fairness notion called \emph{positivity}.

Given a specific division of $C$ among families, define a \emph{zero} agent as an agent who values his family's share as 0 and a \emph{positive} agent as an agent who believes his family received a share with a positive value. Note that \prness{} implies positivity but not vice-versa.
The following lower bound holds even for positivity, hence it also holds for \prness{}.

\begin{lemma}
\label{lem:neg-mpr}
Assume there are $n=mk$ agents, divided into $k$ families with $m$ members in each family. To guarantee that at least $q$ members in each family are positive, the total number of components may need to be at least: 
\begin{align*}
k\cdot\frac{kq-m}{k-1}
\end{align*}
\end{lemma}
\begin{proof}
Number the families by $j=0,...,k-1$ and the members in each family by $i=0,...,m-1$. Assume that $C$ is the interval $[0,mk]$. In each family $j$, each member $i$ wants only the following interval: $(ik+j,\,ik+j+1)$. Thus there is no overlap between desired intervals of different members. The table below illustrates the construction for $k=2,\,m=3$. The families are \{Alice,Bob,Chana\} and \{David,Esther,Frank\}:
\begin{center}
\begin{tabular}{|c|c|c|c|c|c|c|}
\hline Alice   & 1 & 0 & 0 & 0 & 0 & 0 \\
\hline Bob     & 0 & 0 & 1 & 0 & 0 & 0 \\
\hline Chana & 0 & 0 & 0 & 0 & 1 & 0 \\
\hline
\hline David   & 0 & 1 & 0 & 0 & 0 & 0 \\
\hline Esther     & 0 & 0 & 0 & 1 & 0 & 0 \\
\hline Frank & 0 & 0 & 0 & 0 & 0 & 1 \\
\hline
\end{tabular}
\end{center}

Suppose the piece $X_j$ (the piece given to family $j$) is made of $l\geq1$  components. We can make $l$ members of $F_j$ positive using $l$ intervals of positive length inside their desired areas. However, if $q>l$, we also have to make the remaining $q-l$ members positive. For this, we have to extend $q-l$ intervals to length $k$. Each such extension totally covers the desired area of one member in each of the other families. Overall, each family forces $q-l$ zero members in each of the other families. The number of zero members in each family is thus $(k-1)(q-l)$. 
Adding the $q$ members who must be positive in each family gives the necessary condition: $(k-1)(q-l)+q\leq m$. This
is equivalent to:
\begin{align*}
&
m + l(k-1) \geq k q
\\
\implies &
l \geq {k q - m \over k-1}
\end{align*}
The total number of components is $k\cdot l$, which is the claimed expression.
\qed
\end{proof}
By setting $q = h m$ in Lemma \ref{lem:neg-mpr}, we get the following lower bound on the number of cuts in an $h$-\mpr{} division:
\begin{theorem}
\label{neg-mpr}
For any $h\in[0,1]$, in an $h$-\mpr{} division with $n$ agents grouped into $k$ families, the number of components may need to be at least
\begin{align*}
n\cdot\frac{h k-1}{k-1}
\end{align*}
\end{theorem}
In a \upr{} division $h=1$, so the number of components is at least $n$, which coincides with the lower bound of Theorem \ref{neg-upr}. 
On the other hand, when $h = 1/k$ 
the lower bound is $0$, and indeed we already saw that in this case a connected allocation is feasible (Theorem \ref{pos-mpr-connected}).
However, when $h > 1/k$, 
for sufficiently large $n$, the expression in 
Theorem \ref{neg-mpr} is larger than $k$, which implies that a connected division might not exist. In particular, we cannot guarantee a connected ${1\over 2}$-\mpr{} division for three or more families, and we cannot guarantee a connected $h$-\mpr{} division even for two families if $h>1/2$.

\subsection{Three or more families: positive results}
Suppose we do want a ${1\over 2}$-\mpr{} division for three or more families. How many components are sufficient?

As a first positive result, we can use Theorem \ref{pos-upr-different}, substituting $n/2$ instead of $n$: select half of the members in each family arbitrarily, then find a division which is \upr{} for them while ignoring all other members. This leads to:
\begin{theorem}
\label{pos-mpr-different}
Given $n$ agents in $k$ families with different entitlements, ${1\over 2}$\-/\mprness{} with $1+\logbound\cdot(n-2)$ components is feasible.
\end{theorem}

However, for families with equal entitlements we can do much better. 

\begin{theorem}
\label{pos-mpr-equal}
Given $n$ agents in $k$ families with equal entitlements, ${1\over 2}$\-/\mprness{} with at most
\begin{align*}
\min\bigg(
2+(\lceil k/2\rceil-1)\cdot(n/2-2)
&&
,
&&
2+\lceil\log_{2}\lceil k/2\rceil\rceil\cdot(n-8)
\bigg).
\end{align*}
components is feasible.
\end{theorem}
\begin{proof}
The proof is summarized in Algorithm \ref{alg:pos-mpr-equal}.
\end{proof}
\begin{algorithm}
INPUT: 

- $C := $ the unit interval $[0,1]$.

- $n$ additive agents, all of whom value $C$ as 1.

- A grouping of the agents to $k$ families, $F_1,...,F_k$.\\

OUTPUT: 

A ${1\over 2}$-\mpr{} division of $C$ into $k$ pieces.\\

ALGORITHM:

\textbf{Step 1: Halving}

- Each agent $i=1,...,n$ selects an $x_{i}\in[0,1]$ such that $V_{i}([0,x_{i}])=\frac{\lceil k/2\rceil}{k}$
(this means $\frac{1}{2}$ if $k$ is even and $\frac{k+1}{2k}$ if
$k$ is odd). Note: $V_{i}([x_{i},1])=\frac{\lfloor k/2\rfloor}{k}$.

- For each family $j=1,...,k$, find the median of its members' selections:
$M_{j}=\textrm{{median}}_{i\in F_j}x_{i}$.

- Order the families in increasing order of their medians. Find the
median of the family-medians: $M^{*}=M_{\lceil k/2\rceil}$. Cut $C$ at $x=M^{*}$.

\textbf{Step 2: Sub-division}

- Define the \emph{western families} as the $F_j$ with $j=1,...,\lceil k/2\rceil$.
Let $n_{W}$ be the total number of members in these families. Divide
the interval $[0,M^{*}]$ among the western families using \unpr$(n_{W}/2,\,\lceil k/2\rceil)$.

- Similarly, define the \emph{eastern families} as the $F_j$ with
$j=\lceil k/2\rceil+1,...,k$. There are $\lfloor k/2\rfloor$ such
families. Let $n_{E}$ be their total number of members. Divide the
interval $(M^{*},1]$ among the eastern families using \unpr$(n_{E}/2,\,\lfloor k/2\rfloor)$.

\protect\caption{\label{alg:pos-mpr-equal} ${1\over 2}$-\mpr{} division for $k\geq 2$ families.}
\end{algorithm}

The algorithm works in two steps.

\textbf{Step 1: Halving}. For each family, a location $M_{j}$ is calculated such that, if $C$ is cut at $M_{j}$, half the family members value the interval $[0,M_{j}]$ as at least $\frac{\lceil k/2\rceil}{k}$ and the other half value the interval $[M_{j},1]$ as at least $\frac{\lfloor k/2\rfloor}{k}$.
Then, $C$ is cut in $M^{*}$ --- the median of the family medians.
The $\lceil k/2\rceil$ ``western families'' --- for which $M_{j}\leq M^{*}$
--- are assigned to the western interval of $C$ --- $[0,M^{*}]$.
By construction, at least half the members in each of the western families value $[0,M^{*}]$ as at least $\frac{\lceil k/2\rceil}{k}$.
We say that these members are ``happy''. Similarly, the $\lfloor k/2\rfloor$
eastern families --- for which $M_{j}\geq M^{*}$ --- are assigned to the
eastern interval $(M^{*},1]$; at least half the members in each of these families are ``happy'', i.e, value the interval $(M^{*},1]$ as at least $\frac{\lfloor k/2\rfloor}{k}$.

If there are only two families ($k=2$), then we are done: there is exactly one western family and one eastern family ($\lceil k/2\rceil=\lfloor k/2\rfloor=1$
). For each family $j\in\{1,2\}$, at least half the members of each family value their family's share as at least $1/2$. Hence, the allocation of $X_j$ to family $j$ is ${1\over 2}$-\mpr{}. 

If there are more than two families ($k>2$),
an additional step is required.

\textbf{Step 2: Sub-division}. Each of the two sub-intervals should be further divided among the families assigned to it. In each family $F_j$,
at least $n_j/2$ members are happy. So for each $F_j$, select exactly $n_j/2$ members who are happy. Our goal now is to make sure that these agents remain happy. This can be done using a \upr{} allocation, where only $n_j/2$ happy members in each family (hence $n/2$ members overall) are counted. 

The \upr{} allocation guarantees that every western-happy-member believes that his family's share is worth at least $\frac{\lceil k/2\rceil}{k}\cdot\frac{1}{\lceil k/2\rceil}=\frac{1}{k}$.
Similarly, every eastern-happy-member believes that his family's share is worth at least $\frac{\lfloor k/2\rfloor}{k}\cdot\frac{1}{\lfloor k/2\rfloor}=\frac{1}{k}$.
Hence, the resulting division is ${1\over 2}$\-/\mpr{}.

We now calculate the number of components in the resulting division. One cut is required for the halving step. For the \upr{} division of the western interval, the number of required cuts is at most $(\lceil k/2\rceil-1)\cdot(n_{W}/2-1)$ by Theorem \ref{pos-upr},
and at most $\lceil\log_{2}\lceil k/2\rceil\rceil\cdot(n_{W}-4)$
by Theorem \ref{pos-upr-log}. Similarly, for the eastern interval the number of required cuts is at most the minimum of $(\lfloor k/2\rfloor-1)\cdot(n_{E}/2-1)$
and $\lceil\log_{2}\lfloor k/2\rfloor\rceil\cdot(n_{E}-4)$. The total
number of cuts is thus at most $1+(\lceil k/2\rceil-1)\cdot(n/2-2)$ and at most $1+\lceil\log_{2}\lceil k/2\rceil\rceil\cdot(n-8)$. The total number of components is larger by one. 

\subsection{Comparison and Open Questions}
\label{sec:comparison}
Table \ref{tab:summary-pr} compares the three variants of \prness{}, focusing on families with equal entitlements. Recall that $n$ is the total number of agents in all families.

\begin{table}
\begin{center}
\begin{tabular}{|c|c||c|c|c|}
\hline 
\multirow{2}{2.5cm}{\textbf{\small Fairness notion}} &
\multirow{2}{*}{\textbf{\shortstack{$k$}}} &
\multicolumn{2}{c|}{\textbf{\#Connectivity Components}}
&
\textbf{\shortstack{Compatible\\with PE}}
\\
\cline{3-4} 
&  & \textbf{Lower} & \textbf{Upper} 
&
\\
\hline 
\hline 
\apr{} (Sec. \ref{sec:avg-fairness}) & $k$ & $k$ & $k$ (connected)
& Yes
\\
\hline 
\cline{2-5}
\upr{} & 2 & $n$ & $n$
& \multirow{4}{*}{Yes}
\\
\cline{2-4}
& $3$ & $n$ & $2n-1$
&
\\
\cline{2-4}
& $4$ & $n$ & $2n-3$
&
\\
\cline{2-4}
(Sec. \ref{sec:unan-fairness}) & $k$ & $n$ & \shortstack{$\min(1+\logbound\cdot(2n-4),$ \\ $(k-1)\cdot(n-1)+1)$
}
& 
\\
\hline 
\cline{2-4} 
${1\over 2}$-\mpr{} & 2 & 2 & 2 (connected)
& \multirow{4}{*}{Yes}
\\
\cline{2-4} 
& $3$ & $n/4$ & 
$n/2$
&
\\
\cline{2-4} 
& $4$ & $n/3$ & 
$n/2$
&
\\
\cline{2-4} 
(Sec. \ref{sec:dem-fairness}) & $k$ & $n\cdot\frac{k/2-1}{k-1}$ & 
\shortstack{$\min(2+\lceil\log_{2}\lceil k/2\rceil\rceil\cdot(n-8),$ \\ $2+(\lceil k/2\rceil-1)\cdot(n/2-2))$
}
&
\\
\hline 
\end{tabular}
\end{center}
\caption{\label{tab:summary-pr}
Properties of a \pr{} division with different family-fairness notions and different number of families ($k$).  
The rightmost column considers the compatibility of the fairness notion with Pareto-efficiency.
}
\end{table}
The case of $k=2$ families is well-understood. The results for all fairness criteria are tight: by all fairness definitions, we know that a fair division exists with the smallest possible number of connectivity components. 

The case of $k>2$ families opens some questions:
\begin{itemize}
\item Is \uprness{} with $n$ components feasible for all $k$? (particularly, with $k=3$ families, is the number of required components $n$ as in the lower bound, or $2n-1$ as in the upper bound?).
\item Is ${1\over 2}$-\mprness{} with $n\cdot\frac{k/2-1}{k-1}$ components feasible for all $k$?  (particularly, with $k=3$ families, is the number of required components $n/4$ as in the lower bound, or $n/2$ as in the upper bound?).
\end{itemize}
The case of different entitlements is much less understood even for individual agents \citep{SegalHalevi2018Entitlements}, let alone for families.

What fairness notion is the most practical?
The table shows that it depends on the total number of agents ($n$). When $n$ is small (as is common when dividing an estate among heirs), it is reasonable to try to attain a unanimously-fair division. However, when $n$ is large (as is common when dividing disputed lands among states), unanimous fairness quickly becomes impractical, as the number of components might grow linearly with $n$. 
In this case, we must settle for a weaker fairness criterion. When $k=2$, we can find a democratically-fair allocation
that is also connected. When $k>2$, democratic fairness too might be impractical, and we may have to settle for average-fairness.

\section{Pareto-efficiency}
\label{sec:efficiency}
So far, we studied 
the compatibility of fairness criteria with a 
\emph{geometric} requirement --- reducing the number of connectivity components.
In this section we replace the geometric requirement with an \emph{economic} requirement --- {Pareto efficiency}. 
An allocation is called \emph{Pareto-efficient (PE)} if no other allocation is weakly better for all individual agents and strictly better for some individual agents. 
Fortunately, PE is compatible even with the strongest variant of the \prness{} criterion:
\begin{theorem}
\label{pos-pe-upr}
There always exists an allocation that is both PE and \upr{} (hence also \apr{} and \mpr{}), even when families have different entitlements.
\end{theorem}
\begin{proof}
We use a famous theorem
of \citet{Dubins1961How}, which is a special case of a measure-theoretic theorem by \citet{dvoretzky1951relations}. 

For every partition $X$ of $C$ into $k$ pieces, let $M(X)$ be its \emph{value-matrix} --- an $n$-by-$k$ matrix $M$ where  $\forall i\range{1}{n}, \forall j\range{1}{k}: M_{i,j} = V_i(X_j)$.
Let $\mathbb{M}_C$ be the set of all matrices that correspond to such partitions:
\begin{align*}
\mathbb{M}_C :=
\{
M(X)
| \text{$X$ is a partition of $C$ into $k$ pieces}
\}
\end{align*}
Theorem 1 of \citet{Dubins1961How} implies that the set $\mathbb{M}_C$ is compact.

Define a second set of matrices representing the \upr{} condition:
\begin{align*}
\mathbb{M}_{PR} := 
\{
M\text{ is an $n\times k$ matrix}
| 
\forall j\range{1}{k}:
\forall i\in F_j: 
M_{i,j} \geq t_j
\}
\end{align*}
Finally, define $\mathbb{M}_{CPR} := \mathbb{M}_{C} \cap \mathbb{M}_{PR}$. This set represents all value-matrices of allocations of $C$ that are \upr{}.
By Theorem \ref{pos-upr}, $\mathbb{M}_{CPR}$ is non-empty.
Since $\mathbb{M}_{C}$ is compact and $\mathbb{M}_{PR}$ is closed, their intersection $\mathbb{M}_{CPR}$ is compact. 

Define the following function $U: \mathbb{M}_{CPR} \to \mathbb{R}$:
\begin{align*}
U(M) := \prod_{j=1}^k 
\prod_{i\in F_j}
M_{i,j}
\end{align*}
This is a continuous function, so it has a maximum point in $\mathbb{M}_{CPR}$; let's call it $M^*$.
This matrix corresponds to an allocation $X^*$ that maximizes, among all \upr{} allocations, the product of valuations of all agents:
$\prod_{j=1}^k 
\prod_{i\in F_j}
V_i(X_j)$.
This product is strictly increasing with the value of each agent $i\in N$, so the allocation $X^*$ is Pareto-efficient in the set $\mathbb{M}_{CPR}$.
Since every Pareto-improvement of an allocation in $\mathbb{M}_{CPR}$ is also in $\mathbb{M}_{CPR}$, the allocation $X^*$ is also Pareto-efficient in general.
\qed
\end{proof}

\begin{remark}
\label{pe-connected}
So far we considered two pairs of requirements: fairness+connectivity and fairness+efficiency. This raises the natural question of whether connectivity+efficiency are compatible. We provide two answers.

\paragraph{(a)}
There might not exist a connected allocation that is Pareto-efficient \emph{in the set of all allocations}, even without any fairness considerations, and even with only two individual agents (two singleton families). Moreover, the number of components in such allocation might be unbounded.
\begin{proof}
For any integer $M$, suppose $C$ is the interval $[0, 2M]$. Suppose agent 1  assigns a value of $1$ to the intervals $[0,1], [2,3], [4,5]$, etc., and a value of $0$ to the rest of $C$, and agent 2 assigns a value of $1$ to the intervals $[1,2], [3,4], [5,6]$, etc., and a value of $0$ to the rest of $C$.  Then, any Pareto-efficient allocation has $2M$ connected components.
\qed
\end{proof}

On the other hand:

\paragraph{(b)}
There always exists a connected allocation that is Pareto-efficient \emph{in the set of all allocations with at most $d$ components}, where $d\geq 1$ is any fixed integer. Existence is guaranteed for any number of families with any number of agents, and even with a \uprness{} requirement.
\begin{proof}
Suppose w.l.o.g. that $C$ is the interval $[0,1]$.
Each division with at most $d$ components can be represented by a vector $x_1,\ldots,x_d$ where $\forall i\in[d]: x_i\in[0,1]$ and $\sum_{i=1}^d x_i=1$ (where $x_i$ represents the length of the $i$-th component). Hence the set of all such divisions can be represented by the $(d-1)$-dimensional standard simplex in $\mathbb{R}^d$. This set is compact and convex. Hence, similarly to Theorem \ref{pos-pe-upr}, this set contains an allocation that is both PE and \upr{}.
\qed
\end{proof}
\end{remark}

\section{Envy Freeness}
\label{sec:no-envy}
So far, we used \prness{} as our individual fairness criterion. Another criterion that is very common in economics is \emph{\efness{}}.
We study this criterion for families with equal entitlements.

Analogously to the definitions in subsection \ref{sub:fairness}, we call an allocation $X$ ---
\begin{align*}
\text{\aef{}}
&&
\text{if~}
\forall j,j'\range{1}{k}: 
&&
\wavg_j(X_j)\geq \wavg_j(X_{j'});
\\
\text{\uef{}}
&&
\text{if~}
\forall j,j'\range{1}{k}: 
&&
\forall i\in F_j: V_i(X_j)\geq V_i(X_{j'});
\\
\text{$h$-\mef{}}
&&
\text{if~}
\forall j,j'\range{1}{k},
\end{align*}
\vskip -8mm
\begin{align*}
\text{for at least a fraction $h$ of the members }i\in F_j: 
V_i(X_j)\geq V_i(X_{j'}).
\end{align*}
With individual agents, it is well known that \efness{} implies \prness{} (with equal entitlements).
With two individual agents, \efness{} and \prness{} are equivalent.
The same implications are true with families. 
Each variant of \efness{} implies the corresponding variant of \prness{}.%
\footnote{
Suppose an agent $i \in F_j$ thinks the division is \ef{}.
Then $V_i(X_j)$ is equal to $\max_{j'=1}^{k} V_i(X_{j'})$. The maximum is at least as large as the average, so 
$V_i(X_j)$ is at least as large as an average value of a piece, which is $V_i(C)/k$.}
When there are $k=2$ families, each variant of \efness{} is equivalent to the corresponding variant of \prness{}.%
\footnote{
Suppose an agent $i \in F_j$ thinks the division is \pr{}. Then $V_i(X_j) \geq 1/2$. By additivity, for the other family $j'\neq j$, $V_i(X_{j'}) \leq 1/2$.
Hence 
$V_i(X_j) \geq V_i(X_{j'})$, so agent $i$ thinks the division is \ef{}.
}

Most of our results for \prness{} with equal entitlements are also valid for \efness{}.
For \aefness{}, we can use
classic results proving the existence of \ef{} allocations with connected pieces among individual agents \citep{Stromquist1980How,Su1999Rental}.
Applying the same reduction as in Theorem \ref{pos-apr} we get:

\begin{customthm}{\ref{pos-apr}'}
For any $k$ families, \aefness{} with connected pieces is feasible.
\end{customthm}

Since \efness{} implies \prness{}, the negative results are still valid:

\begin{customthm}{\ref{neg-upr}'}
For every $N,K$, let $n=N(K-1)+1$. A \uef{} division for $n$ agents grouped into $K$ families might require at least $n$ components.
\end{customthm}

Some positive results remain valid too. Lemma \ref{lemma:exact-implies-unprop} is based on an exact division. Therefore it holds, with the same proof, even if we replace \upr{} with \uef{}. Therefore:
\begin{customthm}{\ref{pos-upr}'}
Given $n$ agents in $k$ families, a \uef{} division with $(n-1)\cdot(k-1)+1$ components is feasible.
\end{customthm}

However, the recursive-halving procedure of
Theorem \ref{pos-upr-log} cannot be used here. Suppose we divide $C$ into two subsets, West and East, which all $n$ agents value at exactly $1/2$.
Then, we assign arbitrary $k/2$ families to West and the other families to East. We find an exact division of the West among the western families and an exact division of the East among the eastern families. While this division is \pr{}, it is not necessarily \ef{}, since the agents in the west might envy families in the east and vice versa. 
Therefore, 
while 
the number of components required for \upr{} 
division is in $O(n \log k)$,
the best we can currently say about the number of components required for \uef{} is that it is in $O(n k)$.

The positive result of Theorem \ref{pos-mpr-connected} holds for \efness{} too:
\begin{customthm}{\ref{pos-mpr-connected}'}
For every integer $k\geq 2$, there exists a connected division among $k$ families, that is \ef{} for at least $1/k$ of the members in each family.
\end{customthm}
\begin{proof}
\citet{Su1999Rental} presents a procedure (attributed to Simmons) for finding a connected \ef{} division among $k$ individual agents. It is based on presenting various connected partitions to the agents, and asking each agent which of the $k$ pieces is the best.
He proves that there exists a partition in which each agent gives a different answer; that partition corresponds to an envy-free allocation. He also shows a procedure for finding a sequence of partitions that converges (after possibly infinitely many steps) to that envy-free allocation.

The Simmons-Su procedure can be adapted to families in the following way. Whenever a family is asked ``which of the $k$ pieces is better?'', it answers by doing a \emph{plurality voting} among its members. 
Then, in the final division, each family receives a piece that is considered the best by a plurality of its members, which is at least a fraction $1/k$ of its members. Therefore, at least $1/k$ of each family's members feel that the final allocation is envy-free.
\qed
\end{proof}

Similarly, the negative result for $h$-\mprness{} in Theorem \ref{neg-mpr} is valid for $h$-\mefness{} too. 

Theorem \ref{pos-mpr-equal} about ${1\over 2}$-\mprness{} does not hold as-is for ${1\over 2}$-\mefness{}, but it can be adapted by adapting Algorithm \ref{alg:pos-mpr-equal}. Step 1 --- the halving step --- remains the same.
Step 2 --- the subdivision step --- should be modified to use an exact division, as follows:

\begin{quote}
- Using \exact$(n/2,\lceil k/2\rceil)$
, find an exact division of the interval $[0,M^{*}]$ into $\lceil k/2\rceil$ pieces, such that all $n/2$ happy agents find the pieces equal.
Assign these pieces to the \emph{western families} --- the $F_j$ with $j=1,...,\lceil k/2\rceil$.

- Using \exact$(n/2,\lfloor k/2\rfloor)$
, find an exact division of the interval $(M^{*},1]$ into $\lfloor k/2\rfloor$ pieces, such that all $n/2$ happy agents find the pieces equal.
Assign the pieces to the 
\emph{eastern families} --- $F_j$ with
$j=\lceil k/2\rceil+1,...,k$. 
\end{quote}

The halving step requires a single cut.
The two exact divisions require $(n/2)\cdot  (k-2)$ cuts. Therefore the total number of components is $n(k-2)/2+2$:

\begin{customthm}{\ref{pos-mpr-equal}'}
Given $n$ agents in $k$ families with equal entitlements, ${1\over 2}$\-/\mefness{} with at most $n(k-2)/2+2$ components is feasible.
\end{customthm}

Table \ref{tab:summary-ef} summarizes our results for envy-free division and shows some remaining gaps.
\begin{table}
\begin{center}
\begin{tabular}{|>{\centering}p{2.5cm}|c||c|c|c|}

\hline 
\multirow{2}{2.5cm}{\textbf{Fairness notion}} & \multirow{2}{*}{\textbf{$k$}} & \multicolumn{2}{c|}{\textbf{\#Connectivity Components}} 
\\
\cline{3-4} 
&  & \textbf{Lower bound} & \textbf{Upper bound}
&
\textbf{\shortstack{Compatible\\with PE}} 
\\
\hline 
\hline 
\aef{} & Any & $k$ & $k$ (connected)
& Yes
\\
\hline 
\hline 
& 2 & $n$ & $n$ 
& Yes
\\
\cline{2-4}
\uef{} & 3 & $n$ & $2n-1$
& {No}
\\
\cline{2-4}
(Sec. \ref{sec:unan-fairness}) & 4 & $n$ & $3n-2$
& {No}
\\
\cline{2-4}
& Any & $n$ & $(k-1)\cdot(n-1)+1$
& {No}
\\
\hline 
\hline
& 2 & 2 & 2 (connected) 
& Yes
\\
\cline{2-4} 
${1\over 2}$-\mef{} & 3 & $n/4$ & $n/2+2$ 
& ?
\\
\cline{2-4} 
(Sec. \ref{sec:dem-fairness}) & 4 & $n/3$ & $n+1$ 
& ?
\\
\cline{2-4} 
& Any & $n\cdot\frac{k/2-1}{k-1}$ & $n\cdot(k-2)/2+2$ 
& No ($k\geq 5$)
\\
\hline 
\end{tabular}
\end{center}
\caption{\label{tab:summary-ef}
Properties of envy-free division with different family-fairness notions and different number of families ($k$). The rightmost column considers the compatibility of the fairness notion with Pareto-efficiency.
}
\end{table}

We now consider the combination of envy-freeness with Pareto-efficiency. Some of our positive results from Section \ref{sec:efficiency} are still valid:
\begin{customthm}{\ref{pos-pe-upr}'}
(a) With $k=2$ families, there always exists an allocation that is both PE and \uef{} (hence also $h$-\mef{} for any $h$).

(b) With any number of families, 
there always exists an allocation that is PE and \aef{}.
\end{customthm}
\begin{proof}
(a) With $k=2$ families, \efness{} and \prness{} are equivalent, so this follows directly from Theorem \ref{pos-pe-upr}.

(b) We use the same reduction as in Theorem \ref{pos-apr} and the same compactness argument as in Theorem \ref{pos-pe-upr}.
For each family $F_j$, define a representative agent $A_j$ whose valuation is $\wavg_j$. 
There exists an allocation $X^*$ that maximizes the product of valuations of the representatives: $\prod_{j=1}^k \wavg_j(X_j)$.

\citet[Section 5]{segalhalevi2018monotonicity} prove, in the setting of cake-cutting among individuals, that every allocation maximizing the product of values is envy-free. Therefore, in the allocation $X^*$, there is no envy among the representatives. By definition of \aef{}, $X^*$ is an \aef{} allocation among the families.

The product $\prod_{j=1}^k \wavg_j(X_j)$ is 
strictly increasing with the value of each individual agent $i\in N$. Therefore, the allocation $X^*$ maximizing this product is Pareto-efficient. \qed
\end{proof}

Next, consider Remark \ref{pe-connected} regarding connectivity and efficiency.
Part (a) holds regardless of fairness considerations.
Part (b) implies that there always exists a \upr{} allocation that is PE in the set of connected allocations; this positive result is not true when we replace \prness{} with \efness{}, even with singleton families. This is proved by Example 5.1 in \citet{segal2018resource}.

Even without connectivity constraints, Pareto-efficiency is incompatible with \uefness{} and \mefness{}.

The incompatibility between PE and \uefness{} appears even when we take a minimal step forward from the case of two families: there are three families, only one of which is a couple and the other two are singles.
\begin{theorem}
\label{neg-pe-uef}
With three or more families, there might be no allocation that is both PE and \uef{}.
\end{theorem}
\begin{proof}
The proof is based on an example used by \citet{Bade2018PONE} in the context of fair division of homogeneous goods.
$C$ is an interval composed of two sub-intervals $Y$ and $Z$ of length 1. $C$ has to be divided among three families --- a couple and two singles --- with the following valuations:
\begin{center}
\begin{tabular}{|c|c|c|c|c|}
\hline   & Y & Z\\
\hline    \hline
\hline Alice   & 1 & 1\\
\hline George  & 7 & 1\\
\hline    \hline
\hline Bob     & 2 & 1\\
\hline    \hline
\hline Esther   & 5 & 1\\
\hline
\end{tabular}
\end{center}
Suppose that we have a \uef{} allocation of $C$ among the three families. Denote by $Y_{AG},Z_{AG}$ the lengths of $Y,Z$ given to Alice+George, and similarly $Y_{B},Z_{B},Y_{E},Z_{E}$ are the lengths given to Bob and Esther. Then:
\begin{align*}
7 Y_{AG} + Z_{AG} \geq 7 Y_{B} + Z_{B}
&&
\text{(George does not envy Bob)}
\\
2 Y_{B} + Z_{B} \geq 2 Y_{AG} + Z_{AG}
&&
\text{(Bob does not envy George)}
\\
(7-2) Y_{AG} \geq (7-2) Y_B
&&
\text{(from the above inequalities)}
\\
(*)
~~~~~~~~~~~~~~~~~~
Y_{AG} \geq Y_{B}
\\
1 Y_{AG} + Z_{AG} \geq 1 Y_{B} + Z_{B}
&&
\text{(Alice does not envy Bob)}
\\
2 Y_{B} + Z_{B} \geq 2 Y_{AG} + Z_{AG}
&&
\text{(Bob does not envy Alice)}
\\
(1-2) Y_{AG} \geq (1 - 2) Y_{B} 
&&
\text{(from the above  inequalities)}
\\
(**)
~~~~~~~~~~~~~~~~~~
Y_{AG} \leq Y_{B}
\\
(***)
~~~~~~~~~~~~~~~~~~
Y_{AG} = Y_{B}
&&
\text{(from * and **)}
\\
\implies 
Z_{AG} = Z_{B}
&&
\text{(Bob and Alice+George do not envy)}
\end{align*}
We proved that, in any \uef{} allocation, the share given to Alice+George is identical to the share given to Bob (i.e, the same lengths of both subintervals).
The proof does not depend on the exact valuation functions --- it only depends on the fact that $1 < 2 < 7$, i.e, Bob's valuation of $Y$ is strictly between Alice's and George's valuations. Hence, exactly the same proof works for Esther, i.e:
$
Y_{AG} = Y_{E}
$
and
$
Z_{AG} = Z_{E}$.
Therefore, the shares given to all three families are identical.

We now prove that this allocation cannot be PE. We consider three cases.
\begin{itemize}
\item {Case 1:} $Y_{B} = 0$. Then also $Y_E=Y_{AG}=0$ so $Y$ remains unallocated and the allocation is not PE.
\item {Case 2:} $Z_{E} = 0$. Then also $Z_B=Z_{AG}=0$ so $Z$ remains unallocated and the allocation is not PE.
\item {Case 3:} $Y_B$ and $Z_E$ are positive.
Let $\epsilon = \min(Y_{B},Z_{E}/3)$.
Suppose that 
Bob gives $\epsilon$ of his $Y$
to Esther, and gets in exchange  $3 \epsilon$ of her $Z$.
Then, Bob's value increases by  $3\epsilon-2\epsilon$;
Esther's value increases by  $5\epsilon - 3\epsilon$; and the values of Alice and George are unchanged. This means that the original allocation was not Pareto-efficient.
\qed
\end{itemize}
\end{proof}

Weakening \uef{} to \mef{} does not help when there are 5 or more families.

\begin{theorem}
\label{neg-pe-mef}
With five or more families, there might be no allocation that is both PE and \mef{}.
\end{theorem}
\begin{proof}
The proof is based on an extension of the example of Theorem \ref{neg-pe-uef}, where there are five families --- one triplet and four singles --- with the valuations:
\begin{center}
\begin{tabular}{|c|c|c|c|c|}
\hline   & Y & Z\\
\hline    \hline
\hline Alice   & 1/4 & 1\\
\hline Dina   & 1 & 1\\
\hline George  & 4 & 1\\
\hline    \hline
\hline Bob     & 1/2 & 1\\
\hline    \hline
\hline Chana   & 1/3 & 1\\
\hline    \hline
\hline Esther     & 2 & 1\\
\hline    \hline
\hline Frank   & 3 & 1\\
\hline    \hline
\end{tabular}
\end{center}
Suppose that we have a \mef{} allocation of $C$ among the families. By definition of \mef{}, all singles must not feel any envy.
Moreover, in the first family, at least two members must not feel any envy.  There are three options for the identity of these non-envious members.

\textbf{Option A:} Alice and Dina feel no envy. 
We consider Bob and Chana. The value of $Y$ for each of them is strictly between the value of $Y$ for Alice and the value of $Y$ for Dina.
Therefore, similar arguments as in the proof of Theorem \ref{neg-pe-uef} imply that the three allocations of Chana, Bob, and Alice+Dina+George are identical. Now there are three cases:
\begin{itemize}
\item {Case A1:} $Y_C=0$. Then also $Y_B=Y_{ADG}=0$.
Each of Chana, Bob, and Alice+Dina+George receive at most $1/3$ of $Z$, so Dina's value is at most $1/3$.
Since Dina does not envy Esther and Frank, each of them must receive at most $1/3$ of $Y$. This means that at least $1/3$ of $Y$ remains unallocated, so the allocation is not PE.
\item {Case A2:} $Z_B=0$. Then also $Z_C=Z_{ADG}=0$.
Again Dina's value is at most $1/3$,
so Esther and Frank must receive at most $1/3$ of $Z$, so at least $1/3$ of $Z$ remains unallocated, so the allocation is not PE.
\item {Case A3:} $Y_C$ and $Z_B$ are positive.
Then, Bob can give $\epsilon/2.5$ of his $Z$ to Chana in exchange for $\epsilon$ of her $Y$ (for some small $\epsilon>0$) and attain a Pareto improvement, so the original allocation is not PE.
\end{itemize}

\textbf{Option B:} George and Dina feel no envy. 
We consider Esther and Frank. The value of $Y$ for each of them is strictly between the value of $Y$ for George  and the value of $Y$ for Dina. Therefore the three allocations of Esther, Frank, and Alice+Dina+George are identical. The rest of the proof is analogous to Option A.

\textbf{Option C:} Alice and George feel no envy. 
The value of $Y$ for all the singles is strictly between the value of $Y$ for Alice and the value of $Y$ for George. Therefore, the allocations of all five families must be identical. The rest of the proof is analogous to Theorem \ref{neg-pe-uef}.
\qed
\end{proof}

An interesting question that is left open by Theorems \ref{pos-pe-upr}' and \ref{neg-pe-mef} is what happens when there are 3 or 4 families --- does there always exist an allocation that is both PE and \mef{}?

\section{Related Work}
\label{sec:related}
There are numerous papers about fair division in general and fair cake-cutting in particular. We mentioned some of them in the introduction. Here we survey some work that is more closely related to family-based fairness.

\subsection{Dividing other kinds of resources among families}
In a contemporaneous and independent line of work, several authors have studied the problem of fairly dividing \emph{discrete} items among families
\citep{ManurangsiSu17,Suksompong18,suksompong2018resource,kyropoulou2019almost}

They focused on unanimous fairness. They proved that, in many cases, unanimously-fair allocations do not exist. These results complement our impossibility results for unanimous fairness in dividing a \emph{continuous} resource.
After the publication of their work and our working paper, we joined forces to study democratic-fair allocation of discrete goods among families \citep{segalhalevi2018democratic}.

Recently, \citet{ghodsi2018rent} studied fair division of \emph{rooms and rent} among families, using three notions of fairness which they term strong, aggregate and weak. Their strong fairness is our unanimous fairness; their aggregate fairness is our average fairness; their weak fairness means that at least one agent does not envy.

In another recent work, \citet{Bade2018PONE} studied fair and efficient allocation of \emph{homogeneous} divisible resources among families.

\subsection{Group-envy-freeness and on-the-fly coalitions}
\cite{Berliant1992Fair,Husseinov2011Theory,todo2011generalizing} study the concept of \emph{group-envy-freeness}. They assume the standard model of fair division among \emph{individuals} (and not among families). They define a group-envy-free division as a division in which no coalition of individuals can take the pieces allocated to another coalition with the same number of individuals and re-divide the pieces among its members such that all members are weakly better-off. Coalitions in cake-cutting are also studied by \cite{DallAglio2009Cooperation,DallAglio2014Finding}.

In our setting, the families are pre-determined and the agents do not form coalitions on-the-fly. In an alternative model, in which agents \emph{are} allowed to form coalitions based on their preferences, the family-fair-division problem becomes easier. For instance, it is easy to achieve a \upr{} division with connected pieces between two coalitions: ask each agent to mark its median line, find the median of all medians, then divide the agents to two coalitions according to whether their median line is to the left or to the right of the median-of-medians.

\subsection{Fair division with public goods}
In our setting, the piece given to each family is considered a "public good" in this specific family. The existence of fair allocations of homogeneous goods when some of the goods are public has been studied e.g. by \citet{Diamantaras1992Equity,Diamantaras1994Generalization,Diamantaras1996Set,Guth2002NonDiscriminatory}. In these studies, each good is either private (consumed by a single agent) or public (consumed by all agents). In the present paper, each piece is consumed by all agents in a single family --- a situation not captured by existing public-good models.

\subsection{Matching markets}
Besides fair division, family preferences are important in matching markets, too. For example, when matching doctors to hospitals, usually a husband and a wife who are both doctors want to be matched to the same hospital. This issue poses a substantial challenge to stable-matching mechanisms \citep{Klaus2005Stable,Klaus2007Paths,Kojima2013Matching,ashlagi2014stability}.

The idea of satisfying a certain fairness notion for only a certain fraction of the population, rather than unanimously,
has also been studied in the context of matching markets. For example, \citet{ortega2018social} proves that, when two matching markets are merged and a stable matching mechanism is run, it is impossible to attain monotonic improvement for everyone, but it is possible to attain monotonic improvement for at least half the population.

\ifdefined\FULLVERSION

\subsection{Fairness in group decisions}
The notion of fairness between groups has been studied empirically in the context of the well-known \emph{ultimatum game}. In the standard version of this game, an individual agent (the \emph{proposer}) suggests a division of a sum of money to another individual (the \emph{responder}), which can either approve or reject it. In the group version, either the proposer or the responder or both are groups of agents. The groups have to decide together what division to propose and whether to accept a proposed division.

Experiments by \cite{Robert1997Group,Bornstein1998Individual} show that, in general, groups tend to act more rationally by proposing and accepting divisions which are less fair. \cite{Messick1997Ultimatum}
studies the effect of different group-decision rules while \cite{Santos2015Evolutionary} uses a threshold decision rule which is a generalized version of our majority rule (an allocation is accepted if at least $M$ agents in the responder group vote to accept it).

These studies are only tangentially relevant to the present paper, since they deal with a much simpler division problem in which the divided good is \emph{homogeneous} (money) rather than heterogeneous (cake/land).
\fi

\subsection{Non-additive utilities}
\label{sub:nonadditive}
As explained in Sections \ref{sec:unan-fairness} and \ref{sec:dem-fairness}, the difficulty with \uprness{} and \mprness{} is that the associated family-valuation functions are not additive. It is therefore interesting to compare our work to other works on cake-cutting with non-additive valuations.

\cite{Berliant1992Fair,Maccheroni2003How,DallAglio2005Fair} focus on sub-additive, or concave, valuations, in which the sum of the values of the parts is \emph{more} than the value of the whole. These works are not applicable to our setting, because the family-valuations are not necessarily sub-additive --- the sum of values of the parts might be smaller than the value of the whole (see the example in the beginning of Section \ref{sec:unan-fairness}). 

\cite{Sagara2005Equity,DallAglio2009Disputed,Husseinov2013Existence} consider general non-additive value functions. They provide pure 
existence proofs and do not say much about the nature of the resulting divisions (e.g, the number of connectivity components), which we believe is important in practical division applications.

\cite{Su1999Rental} presents a protocol for envy-free division with connected pieces which does not assume additivity of valuations. However, when the valuations are non-additive, there are no guarantees about the value per agent. In particular, with non-additive valuations, the resulting division is not necessarily \pr{}.

\cite{Mirchandani2013Superadditivity} suggests a division protocol for non-additive valuations using non-linear programming. However, the protocol is practical only when the resource to divide is a collection of a small number of \emph{homogeneous} components, where the only thing that matters is what fraction of each component is allocated to each agent. In contrast, in our model the resource is a single \emph{heterogeneous} good.

Finally, \cite{Berliant2004Foundation,Caragiannis2011Towards,SegalHalevi2015EnvyFree} study specific non-additive value functions which are motivated by geometric considerations (location, size and shape). The present paper contributes to this line of work by studying specific non-additive value functions which are motivated by a different consideration: handling the different preferences of family members. A possible future research topic is to find fair division rules that handle these considerations simultaneously, as both of them are important for fair division of land.

\end{document}